\newtheorem{theorem}{Theorem}
\newtheorem{thm}{Theorem}
\newtheorem{lemma}[thm]{Lemma}
\newtheorem{remark}[theorem]{Remark}
\long\def\comment#1{}
\newfont{\bbb}{msbm10 scaled 800}
\newfont{\bb}{msbm10 scaled 1100}
\newcommand{\EE}{\mbox{\bb E}}
\newcommand{\av}{{\bf a}}
\newcommand{\dv}{{\bf d}}
\newcommand{\fv}{{\bf f}}
\newcommand{\gv}{{\bf g}}
\newcommand{\hv}{{\bf h}}
\newcommand{\qv}{{\bf q}}
\newcommand{\rv}{{\bf r}}
\newcommand{\tv}{{\bf t}}
\newcommand{\uv}{{\bf u}}
\newcommand{\wv}{{\bf w}}
\newcommand{\vv}{{\bf v}}
\newcommand{\xv}{{\bf x}}
\newcommand{\yv}{{\bf y}}
\newcommand{\zv}{{\bf z}}
\newcommand{\Bm}{{\bf B}}
\newcommand{\Cm}{{\bf C}}
\newcommand{\Dm}{{\bf D}}
\newcommand{\Fm}{{\bf F}}
\newcommand{\Gm}{{\bf G}}
\newcommand{\Hm}{{\bf H}}
\newcommand{\Mm}{{\bf M}}
\newcommand{\Nm}{{\bf N}}
\newcommand{\Qm}{{\bf Q}}
\newcommand{\Rm}{{\bf R}}
\newcommand{\Sm}{{\bf S}}
\newcommand{\Tm}{{\bf T}}
\newcommand{\lambdav}{\hbox{\boldmath$\lambda$}}
\newcommand{\Lambdam}{\hbox{\boldmath$\Lambda$}}
\newcommand{\defines}{{\,\,\stackrel{\scriptscriptstyle \bigtriangleup}{=}\,\,}}
\def\argmax{\operatornamewithlimits{arg\,max}}
\begin{document}

\title{Feedback Interference Alignment:\\ Exact Alignment for Three Users in Two Time Slots}

\author{\IEEEauthorblockN{Vasilis Ntranos}
\IEEEauthorblockA{USC \\
ntranos@usc.edu}
\and
\IEEEauthorblockN{Viveck R. Cadambe}
\IEEEauthorblockA{MIT / BU \\
viveck@mit.edu}
\and
\IEEEauthorblockN{Bobak Nazer}
\IEEEauthorblockA{BU\\
bobak@bu.edu}
\and
\IEEEauthorblockN{Giuseppe Caire}
\IEEEauthorblockA{USC\\
caire@usc.edu}
\IEEEoverridecommandlockouts
\IEEEcompsocitemizethanks{
\IEEEcompsocthanksitem
}}

\maketitle

\begin{abstract}
We study the three-user interference channel where each transmitter has local feedback of the signal from its targeted receiver. We show that in the important case where the channel coefficients are static, exact alignment can be achieved over two time slots using linear schemes. This is in contrast with the interference channel where no feedback is utilized, where it seems that either an infinite number of channel extensions or infinite precision is required for exact alignment. We also demonstrate, via simulations, that our scheme significantly outperforms time-sharing even at finite SNR.{ --- SUBMITTED TO ICC 2013.}
\end{abstract}

\section{Introduction}

It is widely accepted that interference is now the dominant bottleneck in wireless networks. Conventional techniques, such as time-division, perform poorly as the number of users increases. Recently, the discovery of a new technique, known as interference alignment~\cite{mmk08,cj08,jafar11}, has shown that, in certain scenarios, it is possible for each user to achieve half its interference-free rate. Since the discovery of interference alignment, an ongoing research theme in communications systems is to study the extent to which the gains predicted by theory can be realized in practice. In this paper, we contribute to this research theme through a novel utilization of feedback. In particular, two of the primary limitations of existing schemes for interference alignment are the necessity of high signal-to-noise ratios (SNRs)~\cite{cj08,mgmk09}, and the necessity of an arbitrarily large amount of time-variations and symbol extensions~\cite{cj08,ngjv12IT} to induce alignment. In this paper, we use feedback to lift these restrictions.

We will focus on the single-antenna, three-user Gaussian interference channel with flat fading (i.e., the channel remains constant over the duration of the codeword). It is unknown how to induce alignment in this scenario except in the very high SNR regime via real interference alignment~\cite{mgmk09}. As we will show, if each transmitter has \textit{local feedback} from its targeted receiver, together they can create an effective aligned channel over two adjacent time slots. This implies that each user can simultaneously achieve $1/2$ degrees-of-freedom (DoF). Furthermore, we propose an optimization framework to maximize the sum rate and show that the average achievable rate nearly reaches half the interference-free rate for the entire SNR regime. 

\subsection{Related Work}
It is well-known that feedback does not increase the capacity of point-to-point channels \cite{shannon56}. However, for multi-user networks, feedback can increase capacity. For instance, in the Gaussian multiple-access channel and broadcast channels, feedback has been shown to enlarge the capacity region \cite{ozarow84, ozarowleung84}. More recently, several groups have studied the impact of feedback on Gaussian interference channels. Suh and Tse considered the two-user Gaussian interference channel with local feedback and determined the capacity region to within two bits \cite{st11}. For interference channels with more than two users, since the capacity (or even constant gap approximations) continues to elude researchers, recent literature has focused on the DoF, both with and without feedback. Perhaps surprisingly, Cadambe and Jafar demonstrated that in the $K$-user interference channel, even global and perfect feedback cannot increase the degrees of freedom \cite{cj09ITc}. In other words, the $K$ user interference channel has a maximum of $K/2$ DoF irrespective of the presence of feedback. In light of this negative result, a natural research direction is to explore the utility of feedback to simplify signaling schemes (in particular alignment-based schemes), and to improve finite SNR performance. This research direction is especially important because existing DoF-optimal schemes need arbitrarily large bandwidth \cite{cj08}, or channel precision \cite{mgmk09}. Further, for static channels, DoF is a brittle metric since it is discontinuous at all rational channel gains \cite{eo09}. In addition, a scheme that attains the maximum DoF may still perform poorly at finite SNR. In this paper, we will use feedback to address these issues by providing a simple alignment scheme for static interference channels with significant benefits at finite SNR.

The use of feedback for interference channels with \emph{time-varying} channel gains was studied in \cite{mjs12}, where it was shown that similar to the broadcast channels \cite{mt10}, feedback of stale channel information provides DoF gains. In contrast, the object of focus of our paper is the \emph{static} interference channel, where the coding is done over a single coherence block. Very recently, Papailiopoulos, Suh, and Dimakis studied the static $K$-user Gaussian interference channels under two feedback models \cite{psd12}. In the first, each transmitter gets channel output feedback from every receiver and $K/2$ DoF are easily achievable over two time-slots. In the second, the feedback is observed through a ``backwards interference channel.'' They proved that $3/2$ DoF is achievable for the $3$-user case over $4$ time-slots and presented numerical evidence for the $4$-user case. Geng and Viswanath obtained similar results for the second model and developed a generalization to full duplex communication \cite{gv12}. The key distinction between this work and our own is that we focus on local feedback, i.e., each transmitter only has channel output feedback from its intended receiver. For this setting, we will show that \emph{$3/2$ degrees of freedom are achievable with $2$ time-slots with only local feedback information.} Thus, we present a stronger version of \cite{psd12}, where global feedback was used to achieve the same degrees-of-freedom result. In addition, we optimize our scheme to significantly outperform standard techniques such as time-sharing even at finite SNRs.


\section{Problem Statement}
We will focus on the $3$-user interference channel with (perfect) feedback. In this setting, there are three users indexed by $\ell \in \{1,2,3\}$. Given a \emph{blocklength} of length $n$, each transmitter has a message $w_\ell$ that is drawn independently and uniformly over $\{1,2,\ldots, 2^{nR_\ell}\}$. The channel input\footnote{For ease of analysis, we will consider real-valued channels. All of the analysis extends naturally to complex-valued channels.} from the $\ell$th transmitter is the tuple $\left(x_\ell[1],x_\ell[2],\ldots,x_\ell[n]\right) \in \mathbb{R}^{n}$ whose $i$th entry corresponds to the channel input for the $i$th symbol, where $1 \leq i \leq n$. All channel inputs must satisfy an average power constraint, $\| \frac{1}{n}\sum_{i=1}^{n}{x}_\ell(i)\|^2 \leq nP_\ell$. The channel output at the $k$th receiver is $\mathbf{y}_k= \big[ y_k[1]~y_k[2]~\cdots~ y_k[n] \big]^{\rm T}$ where 
\begin{align}
y_{k}[t] = h_{k1}x_{1}[t] +h_{k2}x_{2}[t] +h_{k3}x_{3}[t] + z_{k}[t]  
\end{align} and the noise $z_k[t]$ is i.i.d. Gaussian with mean zero and unit variance. Let $\mathbf{H} = \{h_{k\ell}\}$ denote the matrix of channel coefficients. We assume that $\mathbf{H}$ is globally available at all transmitters and receivers. (In fact, as we will argue, our scheme can tolerate a significant delay in acquiring the channel state.) The channel output can be written concisely in vector notation
$$\mathbf{y}[t] = \mathbf{H}\mathbf{x}[t] + \mathbf{z}[t] $$ where $\mathbf{y}[t]=\big[y_1[t]~y_2[t]~y_3[t]\big]^{\rm T}$, $\mathbf{x}[t] = \big[ x_1[t] ~x_2[t] ~x_3[t]\big]^{\rm T}$, and $\mathbf{z}[t] = \big[ z_1[t] ~z_2[t] ~z_3[t]\big]^{\rm T}$. In this short-hand notation, we also sometimes vectorize the power constraints as $\mathbf{P}=\big[P_1,P_2,P_3\big]^{\rm T}$.
The key assumption is that each transmitter is given access to local feedback from its intended receiver. That is, the encoding function at each transmitter can be written in terms of the message and the channel outputs up to time $t-1$, $x_\ell[t] = \mathcal{E}_{\ell,t}(w_\ell, y_\ell[1],\cdots,y_\ell[t-1])$. The decoding function at the $k$th receiver uses its $n$ channel observations to estimate the message from the $k$th transmitter, $\hat{w}_k = \mathcal{D}_k(y_k[1],\cdots,y_k[n])$.

We say a rate tuple $(R_1,R_2,R_3)$ is achievable for a given channel matrix $\mathbf{H}$ if, for any $\epsilon > 0$ and $n$ large enough, there exist encoders and decoders such that \mbox{$\mathbb{P}\big( (\hat{w}_1,\hat{w}_2,\hat{w}_3) \neq (w_1,w_2,w_3)\big) < \epsilon$}. We will be primarily concerned with determining the maximum sum rate $R_1 + R_2 + R_3$. The sum-capacity denoted by $C(\mathbf{P})$ is the supremum of $R_1+ R_2+R_3$ over the set of all achievable rate tuples $(R_1, R_2, R_3)$. When $P_1 = P_2 = P_3 = P,$ we simply denote the capacity as $C(P)$.
As usual, the sum degrees-of-freedom (DoF) are defined as
$$d \defines \lim_{P \rightarrow \infty} \frac{C(P)}{\frac{1}{2}\log{P}}.$$
\begin{remark}
Strictly speaking, the DoF $d$ is a function of the channel matrix $\mathbf{H}$. However, as we show in this paper, the DoF of the channel under consideration takes the same value\footnote{The DoF for the interference channel with feedback is not dependent on whether the channel gains are rational or irrational, unlike the case of the interference channel \emph{without} feedback~\cite{eo09}. This makes DoF a more ``robust'' performance metric for the channel in consideration.} for almost all values of $\mathbf{H}$; we therefore suppress its dependence on the channel gain matrix in this paper.
\end{remark}

%
%
%
%
%
%
%
\section{Feedback Interference Alignment}
We describe here a transmission strategy that takes advantage of the locally available feedback in order to improve the overall system performance. In our scheme, transmitters send linear combinations of new symbols and available feedback. Quite remarkably, we show that the proposed scheme can align all interference over two time slots and achieve the maximum DoF of the above channel almost surely. Then we focus on the {\it finite SNR regime} and investigate efficient algorithms to optimize performance.

\subsection{The Achievable Scheme}
Our scheme employs two channel uses in order to successfully transmit one symbol to each receiver. For notational convenience, we assume these time slots are consecutive. Note that the performance of our scheme will not change if these two time slots are separated by a longer delay (up to $n/2$), owing to delay in the feedback path. Let $x_\ell[1]$ denote the first symbol from transmitter $\ell$; receiver $\ell$ intends to estimate this symbol over two time slots.
In the first slot ($t=1$), transmitter $\ell$ simply sends $\mathbf{x}_{\ell}[1].$ The $k$th receiver observes
 $y_{k}[1] = \sum_{\ell=1}^{3} h_{k\ell}x_{\ell}[1]  + z_{k}[1]$. 
Rewriting everything in the compact matrix notation introduced above, we get 
\begin{equation}
\yv[1]=\Hm \xv[1] + \zv[1]
\label{slot1}
\end{equation}
These noisy observations are then fed back by the receiver to the corresponding transmitters.
In the second time slot ($t=2$), transmitter $\ell$ observes $y_{\ell}[1]$ and transmits the linear combination $x_{\ell}[2] = t_{\ell}x_{\ell}[1] + f_{\ell}y_{\ell}[1]$  
for some $t_{\ell},f_{\ell} \in \mathbb R$ whose choice will soon be described. Let $\tv \triangleq [t_{1}, t_{2}, t_{3}]^{\rm T}$, $\fv \triangleq [f_{1}, f_{2}, f_{3}]^{\rm T}$
and $\Tm={\rm diag}(\tv)$, $\Fm={\rm diag}(\fv)$\footnote{As in MATLAB, ${\rm diag}(\dv)$ is a diagonal matrix formed by the vector $\dv$. $\{{\rm diag}(\dv)\}_{i,i}=d_{i}$ and $\{{\rm diag}(\dv)\}_{i,j\neq i}= 0$.}.
In matrix notation, the receivers observe 
\begin{eqnarray}
\yv[2] &=& \Hm\left(\Tm \xv[1] +\Fm\yv[1]\right) +\zv[2] \nonumber \\
&=& \Hm\left(\Tm \xv[1] +\Fm(\Hm \xv[1] + \zv[1])\right) +\zv[2] \nonumber \\
&=& \Hm\left(\Tm+\Fm\Hm\right) \xv[1] + \Hm\Fm\zv[1] +\zv[2]  \nonumber
\label{slot2}
\end{eqnarray}

\begin{remark} Note that our scheme only makes use of the channel state information $\mathbf{H}$ during the second phase. Therefore, we can tolerate a delay of up to $n/2$ in learning the channel matrix.
\end{remark}

Now, after the second transmission, each receiver collects its received signals to form a two-dimensional observation 
$\rv_{k} \triangleq [y_{k}[1], y_{k}[2]]^{\rm T}$. Let $\hv_{k}^{\rm T}$ denote the 
$k$th row of the channel matrix $\Hm$. The equivalent channel observed at receiver $k$ can be written as
\begin{eqnarray}
\rv_{k}  &=& 
\left[
    \begin{array}{c}
      \hv_{k}^{\rm T}\xv[1] + z_{k}[1] \\
      \hv_{k}^{\rm T}(\Tm +\Fm\Hm)\xv[1] + \hv_{k}^{\rm T}\Fm\zv[1] +z_{k}[2]
    \end{array}
  \right]
  \\&=& 
\left[
    \begin{array}{c}
      \!\!\hv_{k}^{\rm T} \\
      \!\!\hv_{k}^{\rm T}(\Tm +\Fm\Hm)\!\!
    \end{array}
  \right]\xv[1]+
  \left[
    \begin{array}{c}
      z_{k}[1] \\
     \!\! \hv_{k}^{\rm T}\Fm\zv[1] +z_{k}[2]\!\!
    \end{array}
  \right]
  \\&=&
  \Gm_{k} \xv[1] + \wv_{k}
\end{eqnarray}
where $\Gm_{k} = \left[\begin{array}{c}\mathbf{h}_{k}^{\rm T} \\ \mathbf{h}_{k}^{\rm T}(\mathbf{T}+\mathbf{F}\mathbf{H}) \end{array}\right]\in {\mathbb R}^{2\times3}$ is the equivalent channel matrix seen by receiver $k$  and $\wv_{k}$ is the zero-mean colored Gaussian noise vector whose covariance matrix is given by
\begin{eqnarray}
\Cm_{k}&=&{\mathbb E}[\wv_{k}\wv_{k}^{\rm T}] 
=\begin{bmatrix}
       1 & h_{kk}f_{k}    \\[0.3em]
       h_{kk}f_{k} & ||\hv_{k}^{\rm T}\Fm ||^{2}+1        
     \end{bmatrix} 
     \label{cov}
\end{eqnarray}

Notice that for all $k$, both the covariance matrix $\Cm_{k}$ and the equivalent channel matrix $\Gm_{k}$ depend on the  choice of the linear combination coefficients $\tv$,$\fv\in {\mathbb R }^{3}$.

Now, for a given choice of $\tv$ and $\fv$, the $k$th receiver can use a linear MMSE filter to estimate $x_{k}$ from the  vector observation $\rv_{k}$. 
Assuming real-valued channels, Gaussian signaling and coding over a long block, the rate achieved (after normalizing for the use of two time slots) for the $k$th user is 
\begin{equation}
R_{k}(\tv,\fv) = \frac{1}{4}\log\left(\frac{ \mbox{det} \left( \Cm_{k} + \sum_{\ell}P_\ell\gv_{\ell,k}\gv_{\ell,k}^{\rm T}\right)}{\mbox{det}\left({\Cm_{k} + \sum_{\ell\neq k}P_\ell\gv_{\ell,k}\gv_{\ell,k}^{\rm T}}  \right)}\right)
\label{Rmmse}
\end{equation}where $\gv_{\ell,k}$ is the $\ell$th column of the equivalent channel matrix $\Gm_{k}$.

We now intend to describe the set of all feasible $(t_\ell, f_\ell),\; \ell=1,2,3$ that satisfy the power constraints. While the analysis that follows can be performed for arbitrary power constraints, for simplicity of notation, we describe this region for the case of $P_1=P_2=P_3=P.$ Let ${\cal P_{\ell}}$ be the set of all linear combination coefficients $t_{\ell},f_{\ell}$ satisfying the average power constraint 
$\displaystyle\mathbb E \left[(x_{\ell}[2])^{2}\right]\leq P$. We have
\begin{align}
&{\mathbb E}\left[(x_{\ell}[2])^{2}\right] \\
&= {\mathbb E}\left[(t_{\ell}x_{\ell}[1] + f_{\ell}y_{\ell}[1] )^{2}\right] \nonumber \\
&=t_{\ell}^{2}{\mathbb E}[(x_{\ell}[1])^{2}] + f_{\ell}^{2}{\mathbb E}\left[(y_{\ell}[1])^{2}\right] + 2t_{\ell}f_{\ell}{\mathbb E}\left[x_{\ell}[1]y_{\ell}[1]\right] \nonumber \\
&=t_{\ell}^{2}P + f_{\ell}^{2}\left(P||\hv_{\ell}^{\rm T}||^{2}+1\right) + 2t_{\ell}f_{\ell}h_{\ell\ell}   
\end{align}
and
\begin{eqnarray}\small
{\cal P_{\ell}} = \!\left\{\begin{bmatrix}
       t_{\ell}  \\[0.3em]
       f_{\ell}         
       
     \end{bmatrix} \! \! \in \mathbb R^{2} \!  :\!  \left|\left| \begin{bmatrix}
       1 & h_{\ell\ell}    \\
       0 & \sqrt{{P^{-1} + \sum_{j\neq \ell} h_{\ell j}^{2}}}        
       
     \end{bmatrix}
     \begin{bmatrix}
       t_{\ell}  \\[0.3em]
       f_{\ell}         
       
     \end{bmatrix}\right|\right|\leq 1 \!
 \right\}
 \label{power}
\end{eqnarray}

Let $\displaystyle {\cal P} = \left\{ (\tv,\fv)\in {\mathbb R}^{3}\times {\mathbb R}^{3}: (t_{\ell},f_{\ell})\in{\cal P}_{\ell}, \forall \ell \right\}$. The achievable sum rate of the proposed scheme is 

\begin{equation}
R_{sum} = \max_{(\tv,\fv)\in {\cal P}}\sum_{k}R_{k}(\tv,\fv)
\label{sumrate}
\end{equation} 

\subsection{Interference Alignment and DoF}
The natural strategy for the  achievable scheme described above would be to solve (\ref{sumrate}). However, the associated optimization problem is non-convex and hard to analyze. In this section, to simplify analysis, we use the coarser metric of DoF in order to study the performance of our scheme. 

In particular, we will show that it is possible, through an appropriate choice of $\tv$ and $\fv$, for receiver $k$ to null the interference and recover $\lambda_{k}x_{k}[1] + \tilde{z}_l$ where $\lambda_{k} \neq 0$ and $\tilde{z}_{k}$ depends only on $\wv_{k}$. Since we use two time-slots to achieve the above, this would automatically imply that receiver $k$ can decode $x_k[1]$ at $1/2$ DoF as required. The interference alignment is implicit in this scheme because receiver $k$ receives (noisy) linear combinations of $3$ scalars, $x_{1}[1], x_{2}[1],x_{3}[1]$ in two time-slots; resolving $x_{k}[1]$ is possible if the two interferers $x_{j}[1],\, j \neq \ell$ align. To obtain $x_{k}[1],$ receiver $k$ linearly combines $y_{k}[1],y_{k}[2]$ as $y_k[2]+d_{k} y_k[1].$ The goal is to choose $\dv \triangleq [d_{1}, d_{2}, d_{3}]^{\rm T}$, $\tv$, $\fv$ so that
$$y_k[2] + d_k y_k[1] =  \lambda_k x_k[1] + \tilde{z}_{k},\;\; \lambda_{k}\neq 0.$$ 
The above condition for decodability can be expressed in matrix form as 
\begin{equation}
\Dm\Hm+\Hm\Tm+\Hm\Fm\Hm = \Lambdam, ~~{\rm det}(\Lambdam)\neq 0
\label{eq:alignment}
\end{equation} 
where $\mathbf{D}={\rm diag}(\dv).$ 
\begin{remark}
One can think of the vector $[1, d_{k}]$ as a (scaled) linear projection of the corresponding  2-dimensional observation at receiver $k$.
It is worth noting that for fixed $\tv$, $\fv$, the optimal linear receiver that combines $y_{k}[1],y_{k}[2]$ to recover $x_{k}[1]$, assuming that the interference is treated as noise, is given by the MMSE receiver. The corresponding rate achieved is indicated in (\ref{Rmmse}). However, for the purposes of analyzing the coarser metric of degrees of freedom, it suffices to choose the linear combination coefficients $d_{k}$ to zero-force the interference, i.e., to satisfy 
(\ref{eq:alignment}).
\end{remark}
We next show that we can indeed design $\Dm, \Tm, \Fm$ and $\Lambdam$ so that the above condition is satisfied. For space considerations, we provide here only a sketch of the proof.
\begin{lemma}
There exist diagonal matrices $\Dm$, $\Tm$, $\Fm$ and $\Lambdam$ $\in \mathbb R^{3\times3}$ such that
$\Dm\Hm+\Hm\Tm+\Hm\Fm\Hm = \Lambdam$, ${\rm det}(\Lambdam)\neq 0$ with probability 1.
\label{lemma1}
\end{lemma}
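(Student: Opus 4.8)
The plan is to read the matrix identity (\ref{eq:alignment}) entry by entry, reduce it to a homogeneous linear system, and then dispatch the requirement $\det(\Lambdam)\neq 0$ by a genericity argument. With $\Dm={\rm diag}(d_1,d_2,d_3)$, $\Tm={\rm diag}(t_1,t_2,t_3)$, $\Fm={\rm diag}(f_1,f_2,f_3)$, the $(k,m)$ entry of $\Dm\Hm+\Hm\Tm+\Hm\Fm\Hm$ equals $(d_k+t_m)h_{km}+\sum_j h_{kj}f_jh_{jm}$. On the diagonal ($k=m$) this is just $\lambda_k$ and imposes no constraint — it merely \emph{defines} $\Lambdam$ once $(\dv,\tv,\fv)$ are chosen. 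So the actual content of (\ref{eq:alignment}) is the six off-diagonal equations
$$(d_k+t_m)h_{km}+\sum_{j=1}^{3}h_{kj}f_jh_{jm}=0,\qquad k\neq m,$$
which are homogeneous and linear in the nine unknowns $(d_1,d_2,d_3,t_1,t_2,t_3,f_1,f_2,f_3)$. Hence the solution set $V\subseteq\mathbb{R}^{9}$ is a linear subspace of dimension at least $9-6=3$, and in particular it is nontrivial.

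Next I would deal with the nondegeneracy. On $V$ each $\lambda_k$ is a linear functional of $(\dv,\tv,\fv)$, so $\{v\in V:\lambda_k(v)=0\}$ is a subspace of $V$; if it is a \emph{proper} subspace for each $k=1,2,3$, then the union of the three is a measure-zero set in $V$, and any $(\dv,\tv,\fv)\in V$ outside it gives $\det(\Lambdam)=\lambda_1\lambda_2\lambda_3\neq 0$, which proves the lemma. (Note the ``gauge'' vector $(\dv,\tv,\fv)=(\alpha\onev,-\alpha\onev,\zerov)$ lies in $V$ and produces $\Lambdam=\zerov$, so a solution is never unique and one really must steer away from the bad subspaces.) It therefore remains only to show that for almost every $\Hm$ none of $\lambda_1,\lambda_2,\lambda_3$ vanishes identically on $V$.

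For that last step I would make the parametrization of $V$ explicit. Assuming the generic condition $h_{km}\neq 0$ for all $k\neq m$, the off-diagonal equations read $d_k+t_m=c_{km}(\fv,\Hm)$ with $c_{km}=-h_{km}^{-1}\sum_j h_{kj}f_jh_{jm}$ linear in $\fv$; such a system for $(\dv,\tv)$ is solvable exactly when the single cyclic relation $c_{12}+c_{23}+c_{31}=c_{21}+c_{32}+c_{13}$ holds — one further linear condition on $\fv$ — and then a particular solution is pinned down by fixing the gauge, say $t_3=0$. Back-substituting, each $\lambda_k$ becomes an explicit linear form in $\fv$ restricted to this $2$-dimensional subspace, and ``$\lambda_k\equiv 0$ on $V$'' turns into a polynomial condition on the entries of $\Hm$ (namely that $\lambda_k$ be a scalar multiple of the cyclic-compatibility form). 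It then suffices to exhibit one matrix $\Hm_0$ — a convenient numerical or structured choice — for which all three forms are nonzero; the set of $\Hm$ on which some $\lambda_k$ vanishes identically on $V$ is then a proper algebraic subvariety, hence Lebesgue-null. I expect the main obstacle to be exactly this verification: confirming that the freedom left in $\fv$ after imposing the cyclic relation is enough to keep $\lambda_1,\lambda_2,\lambda_3$ simultaneously nonzero, which should come down to checking that a small number of polynomials in the $h_{k\ell}$ do not vanish identically.
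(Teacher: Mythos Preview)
Your proposal is correct and structurally identical to the paper's argument: both split (\ref{eq:alignment}) into the six off-diagonal homogeneous equations and the three diagonal ones, identify the solution set of the former as a linear subspace $V\subseteq\mathbb{R}^9$ of dimension at least $3$, and then argue that the linear functionals $\lambda_1,\lambda_2,\lambda_3$ are not identically zero on $V$ for generic $\Hm$.

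The execution differs. The paper writes the six off-diagonal equations as $\Bm\av=\zerov$, computes an explicit $9\times 3$ basis $\Nm$ for $\mathcal{N}(\Bm)$, and then evaluates $\Qm\Nm$ in closed form; from that it reads off a concrete choice $\wv^*=[1,1/s_{22},1/s_{33}]^{\rm T}$ and explicit polynomial conditions on $\Hm$ (the $s_{ij}$ and a few $2\times 2$ minors of $\Hm$ being nonzero) under which $\lambda_1\lambda_2\lambda_3\neq 0$. Your route parametrizes $V$ more elegantly---via $d_k+t_m=c_{km}(\fv)$ and the single cyclic compatibility $c_{12}+c_{23}+c_{31}=c_{21}+c_{32}+c_{13}$---and then appeals to a soft genericity argument (exhibit one $\Hm_0$, conclude the bad set is a proper subvariety). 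Your approach is tidier and avoids the somewhat unwieldy explicit $\Nm$, but it leaves the final verification as a promise rather than a computation; the paper's explicit formulas, while heavier, are reused downstream in the finite-SNR optimization (the matrices $\Nm$ and $\Qm\Nm$ feed directly into the max-SINR algorithm), so the brute-force route buys something beyond the lemma itself.
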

\begin{proof}
Let 
$d_{i}$, $t_{i}$, $f_{i}$, $\lambda_{i}\in \mathbb R$ be the diagonal elements of $\Dm$, $\Tm$, $\Fm$ and $\Lambdam$ respectively and let
\begin{eqnarray}
\av &\triangleq& [d_{1}, d_{2}, d_{3}, t_{1}, t_{2}, t_{3}, f_{1}, f_{2}, f_{3} ]^{\rm T}, \nonumber \\ 
\lambdav &\triangleq& [\lambda_{1}, 0, 0, 0, \lambda_{2}, 0, 0, 0, \lambda_{3} ]^{\rm T}.
\end{eqnarray}
%

Note that, for any fixed value of $\mathbf{\Lambda},$ the interference alignment condition $\Dm\Hm+\Hm\Tm+\Hm\Fm\Hm = \Lambdam$ is essentially a set of linear equations in $\mathbf{a}$ with one equation corresponding to each entry of $\Lambdam$. Also, note that the condition ${\rm det}(\Lambdam)\neq 0$ is equivalent to stating that $\lambda_{1}\lambda_{2}\lambda_{3} \neq 0.$ Therefore, the conditions of the lemma can be equivalently written in terms of $\mathbf{a}$ and $\lambdav$ as

\begin{equation}
\left\{\begin{array}{l l}
    \Mm\mathbf{a}=\lambdav\\
    \lambda_{1}\lambda_{2}\lambda_{3}\neq 0\\
  \end{array}\right.
  \label{cond1}
\end{equation}
where 
\begin{align}&\Mm = \nonumber \\
&\small \begin{bmatrix}
h_{11} & 0 & 0 & h_{11} & 0 & 0 & h_{11}^{2} & h_{12}h_{21} & h_{13}h_{31}\\
0 & h_{21} & 0 & h_{21} & 0 & 0 & h_{21}h_{11} & h_{22}h_{21} & h_{23}h_{31}\\
0 & 0 & h_{31} & h_{31} & 0 & 0 & h_{31}h_{11} & h_{32}h_{21} & h_{33}h_{31}\\
h_{12} & 0 & 0 & 0 & h_{12} & 0 & h_{11}h_{12} & h_{12}h_{22} & h_{13}h_{32}\\
0 & h_{22} & 0 & 0 & h_{22} & 0 & h_{21}h_{12} & h_{22}^{2} & h_{23}h_{32}\\
0 & 0 & h_{32} & 0 & h_{32} & 0 & h_{31}h_{12} & h_{32}h_{22} & h_{33}h_{32}\\
h_{13} & 0 & 0 & 0 & 0 & h_{13} & h_{11}h_{13} & h_{12}h_{23} & h_{13}h_{33}\\
0 & h_{23} & 0 & 0 & 0 & h_{23} & h_{21}h_{13} & h_{22}h_{23} & h_{23}h_{33}\\
0 & 0 & h_{33} & 0 & 0 & h_{33} & h_{31}h_{13} & h_{32}h_{23} & h_{33}^{2}
\end{bmatrix} \nonumber
\end{align}

For convenience, we can divide the $9$ linear equations above into two parts: the first part contains $6$ equations whose right-hand side should evaluate to $0,$ and the second part contains the remaining $3$ equations whose right-hand side is equal to a non-zero value $\lambda_{i}$. Denote as $\Bm\in \mathbb R^{6\times 9}$ the rows of $\Mm$ that correspond to the first part, and let $\Qm\in \mathbb R^{3\times 9}$ be the matrix containing the remaining rows that correspond to  $\lambda_{1}$,$\lambda_{2}$,$\lambda_{3}$. Condition (\ref{cond1}) becomes

\begin{equation}
\left\{\begin{array}{l l}
    \Bm\mathbf{a}=0\\
    \Qm\mathbf{a}=\begin{bmatrix}\lambda_{1} & \lambda_{2} & \lambda_{3}\end{bmatrix}^{\rm T}\\
    \lambda_{1}\lambda_{2}\lambda_{3}\neq 0\\
  \end{array}\right.
  \label{cond2}
\end{equation}

Let ${\cal N}(\Bm)$ denote the nullspace of $\Bm$. By performing elementary row operations, it can be verified 
that the following matrix is a basis for the nullspace of $\mathbf{B}$.
\begin{align}
&\Nm=\nonumber \\
&\small \begin{bmatrix}
-1  &\! \frac{h_{23}(h_{11}h_{32}-h_{12}h_{31})}{h_{13}h_{31}} & \frac{h_{11}h_{23}h_{32}}{h_{12}h_{21}}-h_{33}\\[0.3em]
-1  &\! \frac{h_{21}h_{32}}{h_{31}} -h_{22}& \frac{h_{13}h_{32}}{h_{12}}-h_{33}\\[0.3em]
-1  &\! 0 &  \frac{h_{23}h_{31}}{h_{21}}-\frac{2h_{12}h_{33}-h_{13}h_{32}}{h_{12}}\\[0.3em]
1 &\! \frac{h_{32}(h_{11}h_{23}-h_{13}h_{21})}{h_{13}h_{31}} 
& \frac{h_{12}h_{33}-h_{13}h_{32}}{h_{12}}+\frac{h_{23}(h_{11}h_{32}-h_{12}h_{31})}{h_{12}h_{21}}\\[0.3em]
1 &\! \frac{h_{12}h_{23}}{h_{13}}-h_{22} & h_{33}-\frac{h_{13}h_{32}}{h_{12}}\\[0.3em]
1 &\! 0 & 0\\[0.3em]
0  &\! -\frac{h_{23}h_{32}}{h_{13}h_{31}}  & -\frac{h_{23}h_{32}}{h_{12}h_{21}}\\[0.3em]
0  &\! 1 & 0\\[0.3em]
0  &\! 0 & 1\\
\end{bmatrix} \nonumber
\end{align}
Any $\vv\in{\cal N}(\Bm)$ can be written as $\Nm\wv$, $\wv \in \mathbb R^{3}$. Condition (\ref{cond2}) becomes 

\begin{equation}
\left\{\begin{array}{l l}
	\wv \in \mathbb R^{3}\\
    \Qm\Nm\wv=\begin{bmatrix}\lambda_{1} & \lambda_{2} & \lambda_{3}\end{bmatrix}^{\rm T}\\
    \lambda_{1}\lambda_{2}\lambda_{3}\neq 0\\
  \end{array}\right.
  \label{cond3}
\end{equation}
We have 
\begin{align} 
\hspace{-0.13in}
\mathbf{Q}\mathbf{N}= \begin{bmatrix}
0 & s_{12} &s_{13}\\
0 & s_{22} &0 \\
0 & 0 & s_{33}
\end{bmatrix}
\end{align}where 
\begin{align}
s_{12} &= \frac{(h_{11}h_{23}-h_{13}h_{21})(h_{11}h_{32}-h_{12}h_{31})}{h_{13}h_{31}} \\
s_{13} &= \frac{(h_{11}h_{23}-h_{13}h_{21})(h_{11}h_{32}-h_{12}h_{31})}{h_{12}h_{21}} \\
s_{22} &= \frac{(h_{12}h_{23}-h_{13}h_{22})(h_{21}h_{32}-h_{22}h_{31})}{h_{13}h_{31}} \\
s_{33} &=\frac{(h_{13}h_{32}-h_{12}h_{33})(h_{21}h_{33}-h_{23}h_{31})}{h_{12}h_{21}}
\end{align} If $s_{22},s_{33}\neq 0$, $s_{12}s_{33}+s_{13}s_{22}\neq 0$, one can choose 
\mbox{$\wv^{*}=[1, {1}/{s_{22}}, {1}/{s_{33}}]^{\rm T}$} in order to obtain


$$\displaystyle \Qm\Nm\wv^{*} = \left[\frac{s_{12}}{s_{22}} + \frac{s_{13}}{s_{33}},\;1,\;1\right]^{\rm T} \;\;\;\;\mbox{and}$$
 $$\lambda_{1}\lambda_{2}\lambda_{3} = \frac{s_{12}}{s_{22}} + \frac{s_{13}}{s_{33}} = \frac{s_{12}s_{33}+s_{13}s_{22}}{s_{22}s_{33}}\neq 0.$$
Putting everything together, $\av^{*}=\Nm\wv^{*}$ satisfies  (\ref{cond1}) if
\begin{equation}
\left\{\begin{array}{l l}
	h_{ij}\neq 0, \forall i,j \\ 
h_{13}(h_{21}h_{32}+h_{23}h_{32})\neq h_{23}(h_{11}h_{32}+h_{12}h_{31})\\
	s_{22},s_{33}\neq 0, \\s_{12}s_{33}+s_{13}s_{22}\neq 0\\
  \end{array}\right.
  \label{cond4}
\end{equation}
We have a solution when the conditions (\ref{cond4}) are satisfied. Since these conditions are satisfied almost surely when $h_{ij}$ are drawn from a continuous non-degenerate distribution, the lemma is proved.

\end{proof}

\subsection{Optimization at finite SNR}

Interference alignment requires in (\ref{cond2}) that $\Bm\av=\mathbf{0}$ and Lemma \ref{lemma1} guarantees  that $\lambda_{1}$,$\lambda_{2}$,$\lambda_{3}\neq 0$ for some $\av_{0}\in {\cal N}(\Bm)$ with probability 1.
However, at finite SNR, satisfying condition (\ref{cond2}) is not enough: It is more important to be able to control the magnitude of the effective channel gains $|\lambda_{i}|^{2}$ and  optimize performance under the given transmit power constraint.

\subsubsection{Exact Interference Alignment }
Maximizing the sum-rate under the exact interference alignment condition (\ref{cond2}) corresponds to solving the following optimization problem

\begin{equation*}
\begin{aligned}
Q1: \;\;\;\;\;\;& \underset{\displaystyle \av}{\text{maximize}} 
&  & \sum_{k}\log(1+P{|\qv_{k}^{\rm T}\av|^{2}}/{\sigma^{2}_{k}(\av)}) \\
& \text{subject to:}
& & \;\;\;\;\Bm\av=0\\
& & & \;\;\;\;\; \av \in {\cal P}
\end{aligned}
\end{equation*}
where $\qv_{k}$ denotes the $k$th row of $\Qm$, ${\cal P}=\{\av \in \mathbb R^{9} : (a_{\ell+3},a_{\ell+6})\in {\cal P}_{\ell},\; \ell = 1,2,3 \}$ is the corresponding power 
constraint defined in (\ref{power}) and $\sigma^{2}_{k}(\av)$ is the effective noise variance
observed at the $k$th user that depends on $\av$ via (\ref{cov}).

As in most rate optimization problems, the main difficulty in solving $Q1$ efficiently comes directly from the non-convexity of the objective. 
We relax this problem to maximizing $\sum |\lambda_{i}|^{2}= ||\Qm\av||^{2}$ subject to the same constraints. 

Let $\Nm\in \mathbb R^{9\times3}$ denote a basis for the nullspace of $\Bm$. A simple change of variables $\av= \Nm \wv$, $\wv \in \mathbb R^{3}$ results in the following optimization problem

\begin{equation*}
\begin{aligned}
Q2: \;\;\;\;\;\;& \underset{\displaystyle \wv \in {\cal P}'}{\text{maximize}} 
&  & ||\Qm\Nm\wv||^{2} \\
\end{aligned}
\end{equation*}
where ${\cal P}' = \{\wv \in \mathbb R^{3} : \Nm\wv \in {\cal P}\}$. Tracing back  the definitions of ${\cal P}$ and ${\cal P}_{\ell}$, one can see that 
$\cal P'$ can be written as the intersection of three ellipsoids -- one for each transmit power constraint in
(\ref{power}).  
Even though maximizing a quadratic over the intersection of $k$ ellipsoids is still an intractable problem, it admits a natural semidefinite programming (SDP) relaxation with provable approximation guarantees \cite{nrt99}. 

A much simpler alternative comes directly from the singular value decomposition (SVD) of $\Sm\triangleq\Qm\Nm$. One can first choose the direction of $\wv_{0}$ along the principal component of $\Sm$, maximizing $||\Sm\wv||^{2}$ subject to $||\wv||=1$, and then scale by $c\in \mathbb R$ such that $c\cdot\wv_{0}\in {\cal P}'$. 

\subsubsection{A ``max-SINR'' approach}
 
Even though interference alignment attains the optimal DoF, it can be too restrictive in the practical SNR range. 
Here, we remove the exact alignment requirement and propose a ``departing direction'' from the nullspace of $\Bm$ (perfect IA) to its orthogonal complement in order to search for better solutions based on a max-SINR heuristic.

Let $\Rm\in \mathbb R^{9\times6}$ be a basis for the rowspace of $\Bm$, denoted ${\cal R}(\Bm^{\rm T})$, $\av_{N}\in{\cal N}(\Bm)$ be an interference alignment solution and $\vv \in {\cal R}(\Bm^{\rm T})$ be a unit vector in an orthogonal direction. We would like to explore solutions of the form $\av = \av_{N} + \delta \cdot \vv$, for some $\delta \in \mathbb R$. Any direction $\vv$ we choose will definitely increase the aggregate interference power since we are moving away from alignment. In this context, we would like to choose $\vv$ in a ``max-SINR'' direction, that can favor useful signal  more than interference. 

A natural choice for a heuristic would be to seek for $\vv\in {\cal R}(\Bm^{\rm T})$ that maximizes the ratio $||\Qm\vv||/||\Bm\vv||$. Writing $\vv = \Rm\wv$, $\wv \in \mathbb R^{6}$ we want to find

\begin{eqnarray}
\wv^{*} &=& \argmax_{\wv} \frac{||\Qm\Rm\wv||}{||\Bm\Rm\wv||} \nonumber\\
&=&\argmax_{\wv}
\frac{\wv^{\rm T}\Rm^{\rm T}\Qm^{\rm T}\Qm\Rm\wv}{\wv^{\rm T}\Rm^{\rm T}\Bm^{\rm T}\Bm\Rm\wv}.\\ \nonumber
\end{eqnarray} This can be seen as a generalized eigenvector problem in which the optimal $\wv^{*}$ is given by the principal eigenvector of the matrix $(\Rm^{\rm T}\Bm^{\rm T}\Bm\Rm)^{-1}\Rm^{\rm T}\Qm^{\rm T}\Qm\Rm$ and 
$\vv^{*}=\Rm\wv^{*}$.

Now, for any solution $\av_{N}\in {\cal N}(\Bm)$ we can use $\vv^{*}$ to search for rate maximizing solutions in their two dimensional span. In fact, one only needs to search for the ``right'' angle between $\av_{N}$ and $\vv^{*}$. Let $\uv^{*}\triangleq \av_{N}/||\av_{N}||$. The candidate solutions to the rate optimization problem can be parametrized by $\theta \in [0,\pi)$ as 

\begin{equation}
\av(\theta) = \beta(\theta)\left[\uv^{*}\cos(\theta)+\vv^{*}\sin(\theta)\right]
\label{theta}
\end{equation}
where $\beta(\theta)\triangleq \sup \{ c\in \mathbb R^{+} : c [\uv^{*}\cos(\theta)+\vv^{*}\sin(\theta)] \in {\cal P}\}$ is the appropriate scaling parameter to satisfy the power constraints and can be computed in closed form given the tractable structure of $\cal P$. 
In view of (\ref{theta}), we can then write the achievable rate in  (\ref{sumrate}) as a 
function of a single variable $\theta \in [0,\pi)$, denoted as $R_{sum}(\av(\theta))$, and find

\begin{equation}
\theta^{*} = \argmax_{\theta}R_{sum}(\av(\theta)).   
\end{equation}
Notice that the above parametrization does not exclude perfect IA solutions ($\theta = 0$) and one expects 
that $\theta^{*}$ will go to zero as the SNR increases.

We propose next an efficient heuristic algorithm that can be used to optimize the sum rate of our feedback scheme across the entire SNR range. The algorithm takes as input the matrices $\Bm\in\mathbb R^{6\times9}$ and $\Qm\in\mathbb R^{3\times9}$  (defined in the previous section) and computes the  linear combination coefficients $\tv$, $\fv\in\mathbb R^{3}$ that will be used by the transmitters in the second time-slot.

\vspace{0.1in}
\hspace{-0.11in }{{\bf{Algorithm}}: Max-SINR Feedback}\\
\vspace{-0.1in}

 {\bf Input:} matrices $\Bm\in\mathbb R^{6\times9}$, $\Qm \in\mathbb R^{3\times9}$
 
  \underline{{\bf Output:} linear combination coefficients $\tv,\fv \in \mathbb R^{3}$}\\
 
1)  {$\Nm =$ basis for $\cal N(\Bm)$, $\Rm =$ basis for $\cal R(\Bm^{\rm T})$};
 
2) $\wv_{N} =$ max\_eigenvector of $(\Qm\Nm)^{\rm T}\Qm\Nm$;
 
3) $\wv_{R} =$ max\_eigenvector of $(\Rm^{\rm T}\Bm^{\rm T}\Bm\Rm)^{-1}(\Qm\Rm)^{\rm T}\Qm\Rm$;
 
4)  $\uv^{*} = \Nm\wv_{N}/||\Nm\wv_{N}||$, $\vv^{*} = \Rm\wv_{R}/||\Rm\wv_{R}||$;
 
5) $\theta^{*} ={\displaystyle \argmax_{\theta}}R_{sum}(\beta(\theta)\left[\uv^{*}\cos(\theta)+\vv^{*}\sin(\theta)\right])$;

6) $\av = \beta(\theta^{*})\left[\uv^{*}\cos(\theta^{*})+\vv^{*}\sin(\theta^{*})\right]$;

7) $\tv=[a_{4},a_{5},a_{6}]^{\rm T}$, $\fv=[a_{7},a_{8},a_{9}]^{\rm T}$;

\section{Numerical Results}
In this section, we evaluate the performance of the proposed feedback scheme in terms of its  average achievable sum rate. We consider real channel coefficients $\Hm$
(that remain fixed for at least two time-slots) and compute the {\it average} sum rate over the ensemble of channel realizations with $h_{k\ell}\sim {\cal N}(0,1)$.
We compare  with the following four schemes.

\subsubsection*{Time Sharing}
Transmitters use the channel in a round robin fashion.
The average achievable sum rate in this case is given by 
$R_{sum} = \sum_k \frac{1}{6}\EE[\log(1+3P|h_{kk}|^{2})]$.

\subsubsection*{Treat as Noise}
All the transmitters use the channel simultaneously and interference is treated as noise.
The average sum rate is given by 
$R_{sum} =\sum_{k}  \frac{1}{2}\EE\Big[\log\Big(1+\frac{P|h_{kk}|^{2}}{1+\sum_{\ell\neq k}P|h_{k\ell}|^{2}}\Big)\Big]$.

\subsubsection*{Ergodic Alignment} For time-varying channels, this scheme can achieve half the interference-free rate at any SNR by carefully pairing up channel matrices to cancel out interference~\cite{ngjv12IT}. We use the performance of this scheme as a benchmark, even though it is \textit{not achievable} in our problem setting as the channel is static. The average sum rate is given by $R_{sum} = \sum_k \frac{1}{4} \EE[\log(1+2P|h_{kk}|^{2})]$. 

\subsubsection*{2-user Feedback} We consider a time sharing version of the two bit gap scheme 
proposed in \cite{st11} for the two-user interference channel with feedback. Transmitters 
can use the channel in pairs and transmit with  power $3P/2$.

In our simulations, we plot the average achievable sum rates for all of the above schemes versus  SNR measured in dB. Figure~\ref{fig1} shows the performance comparison in the standard setting where
$h_{k\ell}\sim {\cal N}(0,1)$.
We observe that the proposed max-SINR feedback scheme gets -- in two time slots -- rates that are not too far from what ergodic IA could ultimately achieve across many independent channel realizations (if the channel were time-varying).
Notice that the max-SINR heuristic is able to provide considerable gains in the low to medium SNR range while it maintains the the right slope as SNR increases.

Figure~\ref{fig2} shows the same comparison in a weaker interference power regime, where the cross channel gains are scaled such that 
$10\log_{10}\EE [|h_{k\ell}|^{2}]=-3$dB.
In this setting, the performance gains obtained by feedback become more significant, especially for the max-SINR design that seems to achieve the right balance between useful signal and interference power.

\begin{figure}[h]
\centerline{\includegraphics[width=1.05\columnwidth]{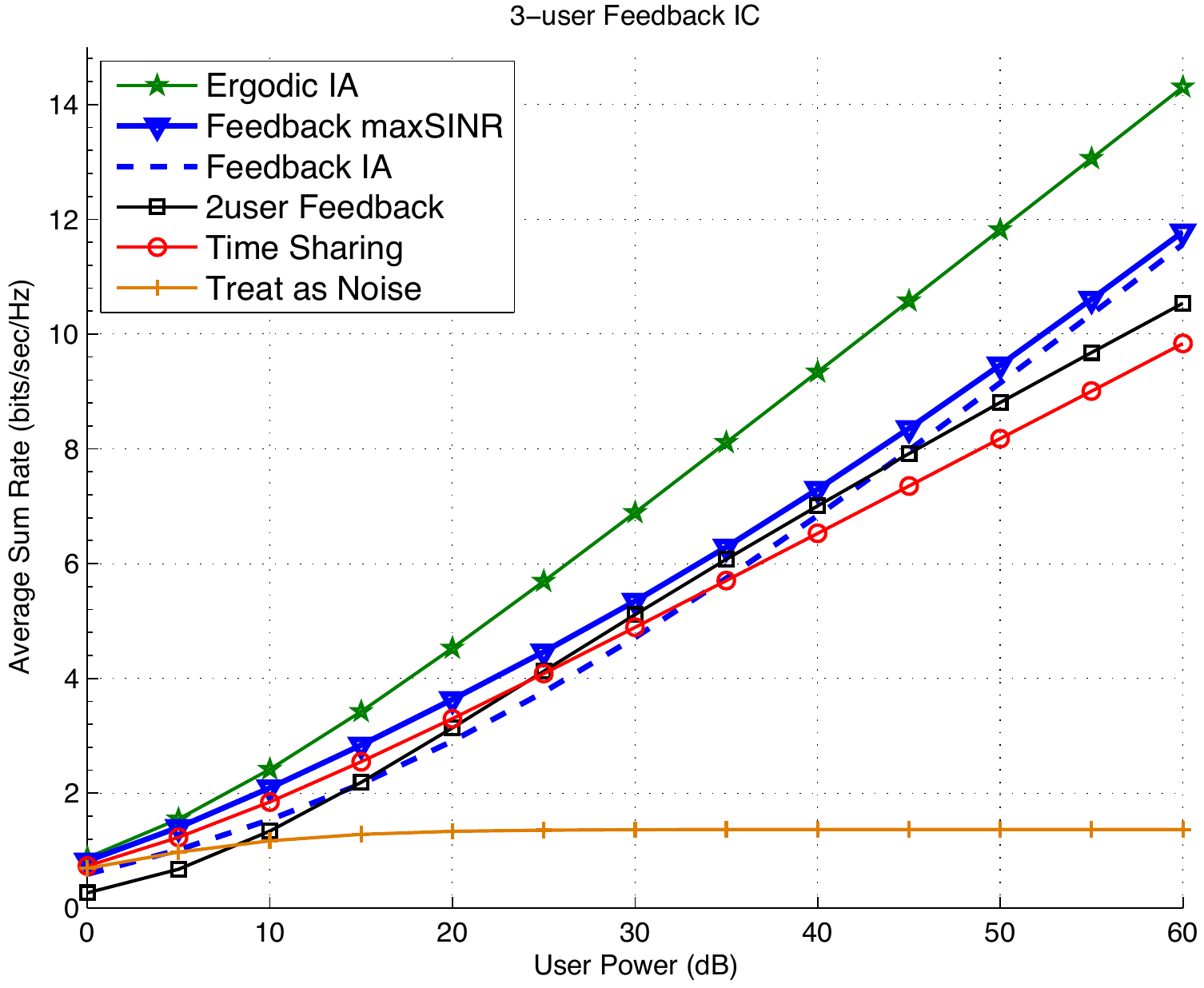}}
\caption{Performance of the proposed feedback scheme. We assume real channel coefficients $h_{k\ell}\sim {\cal N}(0,1)$ and we plot average sum rates.  }
\label{fig1}
\end{figure}

\begin{figure}[h]
\centerline{\includegraphics[width=1.05\columnwidth]{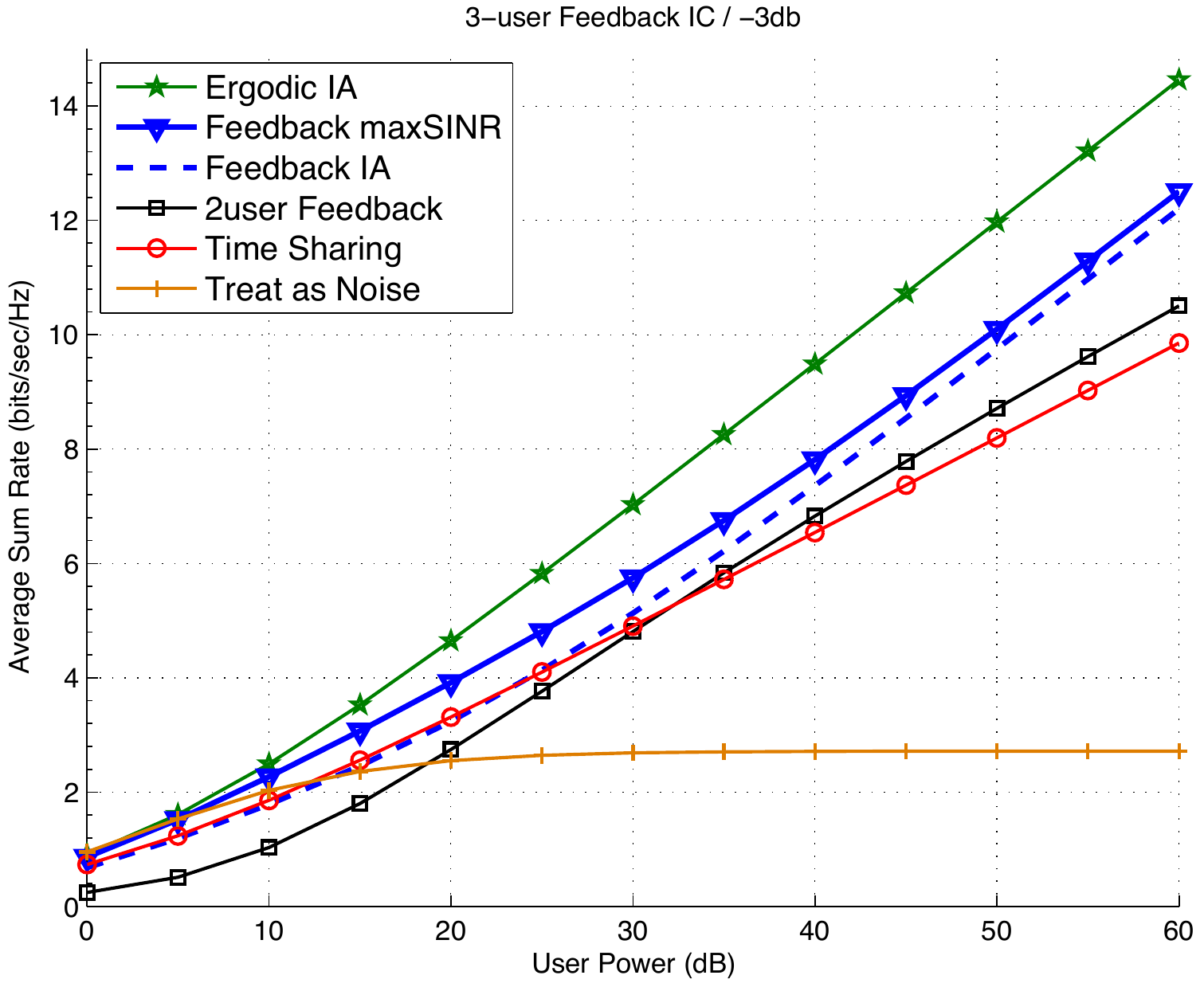}}
\caption{Performance of the proposed feedback scheme with cross-channel gains at -3dB. }
\label{fig2}
\end{figure}

\bibliographystyle{ieeetr}
\bibliography{fbalignbib}

\end{document}